\newenvironment{proof}{\par{\itshape Proof}.\ }
\newtheorem{theorem}{\indent Theorem}
\newtheorem{lemma}[theorem]{\indent Lemma}
\newtheorem{proposition}[theorem]{\indent Proposition}
\newtheorem{remark}[theorem]{\indent Remark}
\begin{document}

\begin{frontmatter}
%\runtitle{Insert a suggested running title}  % Running title for regular
                                              % papers but only if the title
                                              % is over 5 words. Running title
                                              % is not shown in output.

\title{Two-step feedback preparation of entanglement for qubit systems with time delay\thanksref{footnoteinfo}} % Title, preferably not more
                                                % than 10 words.

\thanks[footnoteinfo]{This work was supported in part by the Australian Research Council's
Discovery Projects Funding Scheme under Projects DP190101566, DP180101805, by the Air Force Office
of Scientific Research under Agreement FA2386-16-1-4065, by the
Centres of Excellence CE170100012, by the U.S. office of Naval Research Global under Grant N62909-19-1-2129, and by the National Natural Science Foundation under Grants Nos. 61828303, 61873251, and 61833010.}
\author[ADFA,CQC2T]{Yanan Liu}\ead{yaananliu@gmail.com},
\author[ADFA]{Daoyi Dong}\ead{dayidong@gmail.com},
\author[USTC]{Sen Kuang}\ead{skuang@ustc.edu.cn},
\author[ANU]{Ian R. Petersen}\ead{i.r.petersen@gmail.com},
\author[ADFA,CQC2T]{Hidehiro Yonezawa}\ead{h.yonezawa@adfa.edu.au}
\thanks[b]{Tel. +61-2-62686285, Fax +61-2-62688443 (Daoyi Dong).}

\address[ADFA]{School of Engineering and Information Technology, University of New South Wales, Canberra, ACT 2600, Australia}  % Please supply

\address[CQC2T]{Center for Quantum Computation and Communication Technology, Australian Research Council, Canberra, ACT 2600, Australia}

\address[USTC]{Department of Automation, University of Science and Technology of China, Hefei 230027, PR China}

\address[ANU]{Research School of Electrical, Energy and Materials Engineering, The Australian National University, Canberra, ACT 2601, Australia}

\begin{keyword}                           % Five to ten keywords,
Stochastic quantum systems; Lyapunov method; feedback control; Bell states; GHZ entanglement.               % chosen from the IFAC
\end{keyword}                             % keyword list or with the
                                          % help of the Automatica
                                          % keyword wizard

\begin{abstract}                          % Abstract of not more than 200 words.
Quantum entanglement plays a fundamental role in quantum computation and quantum communication. Feedback control has been widely used in stochastic quantum systems to generate given entangled states since it has good robustness, where the time required to compute filter states and conduct filter-based control usually cannot be ignored in many practical applications. This paper designed two control strategies based on the Lyapunov method to prepare a class of entangled states for qubit systems with a constant delay time. The first one is bang-bang-like control strategy, which has a simple form with swtching between a constant value and zero, the stability of which is proved. Another control strategy is switching Lyapunov control, where a constant delay time is introduced in the filter-based feedback control law to compensate for the computation time. Numerical results on a two-qubit system illustrate the effectiveness of these two proposed control strategies.
\end{abstract}

\end{frontmatter}

\section{Introduction}

Quantum entanglement is regarded as the most remarkable feature that distinguishes quantum physics from classical physics \cite{R1}. Since some entangled states were discovered in physical experiments, they have been extensively studied in quantum inseparability and investigated as a basic resource of quantum technologies \cite{R2}. Entanglement is the basis of nearly all quantum information protocols such as quantum teleportation \cite{R4}, quantum cryptography \cite{R5}, quantum spectroscopy with an atomic resolution \cite{R7} and quantum computation \cite{R6}. Bell states are regarded as the maximally entangled states in two-qubit systems \cite{R10}. GHZ entangled states are an extension of Bell states to multi-qubit systems. The preparation of multi-qubit entangled states is a fundamentally important task in many quantum technologies such as one-way quantum computation \cite{R26} and long-distance quantum communication \cite{R27}.

Compared with open-loop control strategies, feedback control usually has good robustness since it uses feedback information, such as filter states, to suppress the influence of noises and uncertainties \cite{Liu2019feedback,Gao2016fault}. It turns out that the quantum feedback control can be used to increase the entanglement of atoms \cite{R40},\\\cite{R41}, and to generate entanglement in superconducting quantum systems \cite{shankar2013autonomously},\\\cite{riste2013deterministic}. One of the main challenges in measurement-based feedback control comes from the presence of time delay in the feedback loop. The time required to compute filter states and conduct feedback control often cannot be ignored because of very fast quantum evolution \cite{Liu2019feedback}, which becomes a source of instability in quantum control systems and influences the control performance in experiments \cite{R29,R9}. There are two main solutions to deal with this issue. One solution is to reduce the computing time by decreasing the dimension of the filter equation \cite{R17,R18}. A feedback control law based on a reduced order filter equation can be designed to realise real-time feedback control. However, the inevitable error between the full order filter and the reduced-dimensional filter may cause filter-reduction-based control to be ineffective. Therefore, a second method has been proposed, where a constant delay time is compensated for in the control input. Switching control strategies with a compensation time have been proposed to achieve the preparation of eigenstates in two-dimensional and $N$-dimensional systems \cite{R19,R22}, and to prepare the entangled states in three-qubit systems \cite{R20}.

In this paper, we mainly focus on the preparation of a class of entangled states in multi-qubit systems (including two-qubit systems) with a constant delay time. The contributions of this paper are as follows. Firstly, we propose two switching control strategies, both of them can deal with the stabilisation problem with degenerate measured observables (a degenerate operator means that this operator has at least two identical eigenvalues). In this paper, we use two control channels simultaneously and design a switching control law by using the Lyapunov method, which overcomes the obstacle caused by degenerate observables. Secondly, this paper considers delay time in the feedback loop. A bang-bang-like control strategy with a simple form and a switching Lyapunov control strategy which has the potential to speed up convergence are designed to deal with the delay time. We prove the stability of the stochastic feedback control system with a constant delay time, and the system state converges to the target state with probability $1$. We also show that the proposed methods have robust performance with respect to imperfect measurement efficiency and dephasing noise.

The rest of this paper is organised as follows. A stochastic quantum system model is given in Section \ref{sec2}, where we also discuss a possible form of the observable to be measured. In Section \ref{sec3}, a bang-bang-like control law is designed based on the Lyapunov method and the division of the state space, where we construct control Hamiltonians and prove the stability of the time-delay system. Section \ref{sec4} proposes another control strategy with switching between a constant value and the time-delay Lyapunov feedback control law. Numerical results on a two-qubit system are presented in Section \ref{sec5} showing the effectiveness of the proposed feedback control strategies. Section \ref{sec6} includes concluding remarks.

%--------------------------------------------------------------------------------------------------
\section{System model and control task}
\label{sec2}
In this paper, we consider the generation of a given entangled state in an $N$-qubit system with time delay. By introducing homodyne measurement, one can obtain an optimal state estimate based on quantum filter equations. The feedback control law is then designed with the filter state. Such a feedback control system can be described by the following stochastic master equation (SME) \cite{R30}
\begin{equation}\label{eq7}
\left\{
\begin{aligned}
d\rho_t &=-i[H,\rho_t ]dt+\Gamma_A \mathcal{D}[A]\rho_t dt+\sqrt{\eta_A \Gamma_A } \mathcal{H}[A]\rho_t dW_t, \\
dy_t &=dW_t+\sqrt{\eta_A \Gamma_A } {\rm Tr}\left[(A+A^\dagger) \rho_t\right]dt. \\
\end{aligned}
\right.
\end{equation}
Here, $\rho_t$ is the density operator, representing the system state. $H$ is the system Hamiltonian, and can be written as $H\triangleq H_0+\sum_{k}u^{(k)}(t-\tau) H_k$, with the free Hamiltonian $H_0$ and the control Hamiltonians $H_k$. $H_0$ usually can be written as a diagonal matrix $H_0\triangleq {\rm diag}\left[h_1,h_2,\cdots,h_n\right]$, where $n$ is the dimension of state space and $n=2^N$ for $N$-qubit systems. $u^{(k)}(t-\tau)$ is control input with delay time $\tau$. $dW_t$ represents the noisy process caused by measurement and can be expressed as a standard Wiener process. $A$ is an observable described by a Hermitian matrix. $\Gamma_A$ and $\eta_A$ are the measurement strength and efficiency, respectively. $y_t$ is an observable process. $\mathcal{D}[A] \rho_t$ and $\mathcal{H}[A] \rho_t$ have the following forms:
\begin{equation}
\label{eq8}
\left\{
\begin{aligned}
\mathcal{D}[A] \rho_t &=A\rho_t A^\dagger -1/2(A^\dagger A\rho_t+\rho_t A^\dagger A), \\
\mathcal{H}[A] \rho_t &=A\rho_t+\rho_t A^\dagger -{\rm Tr}[(A+A^\dagger)\rho_t]\rho_t. \\
\end{aligned}
\right.
\end{equation}
The target state in this paper is a class of entangled states that has the following form \cite{R32}
\begin{equation}
\label{eq10}
|\varphi\rangle=\frac{1}{\sqrt{2}}\bigg(\prod_{j=1}^{N} |\zeta_j\rangle \pm \prod_{j=1}^N |\zeta_j^c \rangle\bigg),
\end{equation}
where $\zeta_j$ denotes a binary variable in the set $\{0,1\}$ and $\zeta_j^c=1-\zeta_j$, The system state can be written as a density operator $\rho=|\varphi \rangle \langle \varphi|$.

Our control objective is to prepare any given GHZ entangled state of the form (3). In order to achieve this objective, the observable operator $A$ needs to be chosen based on the specific target state. For ease of exposition, we assume that the target state is:
$$\rho_d=\frac{1}{2}(|0 \cdots 0\rangle+|1 \cdots 1\rangle)(\langle 0 \cdots 0|+\langle 1 \cdots 1|) =\frac{1}{2}\left[\begin{smallmatrix}
1 & 0&  \cdots &1\\
0 & 0&\cdots &0\\
\vdots & \vdots &\ddots &\vdots \\
1 & 0 &\cdots &1
\end{smallmatrix}\right].$$ 
According to quantum state reduction under continuous measurement \cite{R15}, we choose an observable operator such that the target state is one of its eigenstates; i.e., $A\rho_d=\lambda_d \rho_d$, where $\lambda_d$ is the eigenvalue related to $\rho_d$. For simplicity, we choose $A$ as a diagonal matrix:
\begin{equation}\label{eq9}
A={\rm diag}\{\lambda_d, \lambda_2, \cdots, \lambda_i, \cdots,  \lambda_d\}
\end{equation}
where $\lambda_i \neq \lambda_d,i=2,\cdots,n-1$. This kind of observable can be factorised as $A=a_1\sigma_z^{(1)}\otimes I_2^{(2)}\otimes \cdots \otimes I_2^{(N)}+ \cdots+a_N I_2^{(1)}\otimes \cdots \otimes \sigma_z^{(N)}$, where $I_2^{(i)}$ is the two-dimensional identity matrix and $\sigma_z^{(i)}$ is the Pauli operator $\sigma_z^{(i)}={\rm diag}\{1, -1\}$ and $a_1+\dots +a_N=0$. In this paper we focus on the systems with degenerate observables in \eqref{eq9}.%--------------------------------------------------------------------------------------------------
\section{Bang-bang like control with time delay}
\label{sec3}
\subsection{Control design}
In the case with degenerate observables, if only one control channel is used, it is usually difficult to design a feedback control law to achieve the preparation of the target state even when we choose a complicated Lyapunov function such as $V(\rho)=1-{\rm Tr}(\rho \rho_d )+c\left[{\rm Tr}(A^2 \rho)-{\rm Tr}^2 (A \rho)\right]$, where $c>0$ is a real constant \cite{R23}. Moreover, using only one control channel will impose more restrictions on the free Hamiltonian $H_0$ and the control Hamiltonian $H_1$. Thus, two control channels $H_1$ and $H_2$ are used in this paper. For simplicity, we set the control law related to $H_1$ at a constant value of $1$ and only design the control law $u^{(2)}(t-\tau)$(denoted as $u^{(2)}$ hereafter). The system model is then described by the following equation:
\begin{equation}
\label{eq11}
\left\{
\begin{aligned}
d\rho_t &=-i[H_0+H_1+H_2u^{(2)},\rho_t ]dt+\Gamma_A \mathcal{D}[A]\rho_t dt\\
&+\sqrt{\eta_A \Gamma_A } \mathcal{H}[A]\rho_t dW_A, \\
dy_t &=dW_A+\sqrt{\eta_A \Gamma_A } {\rm Tr}\left[(A+A^\dagger)\rho_t\right]dt. \\
\end{aligned}
\right.
\end{equation}
We first define the following distance function:
\begin{equation}
\label{eq12}
V(\rho)=1-{\rm Tr}^2 (\rho \rho_d),
\end{equation}
and the following sets with a positive real number $\alpha$:
\begin{equation}
\label{eq13}
\left\{
\begin{aligned}
S_{\alpha} &= \left\{ \rho \in S:V(\rho)=\alpha \right\}, \\
S_{>\alpha} &= \left\{ \rho \in S:\alpha<V(\rho)\leq 1 \right\}, \\
S_{\geq \alpha} &= \left\{ \rho \in S:\alpha \leq V(\rho)\leq 1 \right\}, \\
S_{<\alpha} &= \left\{ \rho \in S:0\leq V(\rho)<\alpha \right\}, \\
S_{\leq \alpha} &= \left\{ \rho \in S:0\leq V(\rho)\leq \alpha \right\}.
\end{aligned}
\right.
\end{equation}
The proposed control strategy includes two steps. In the first step, we choose a special state as the intermediate target state that the system will converge to. In the second step, we drive the system state to the given target state.

The state $\frac{1}{n} I_n$ has a special form that can simplify our calculation, and the distances between any states in \eqref{eq10} and this special state are the same. Hence, it is used as the intermediate target state in this paper. For this task, we define a Lyapunov function that has the form
\begin{equation}
\label{eq14}
V_1(\rho)={\rm Tr} \left[\left(\rho-\frac{1}{n}I_n \right)^2 \right].
\end{equation}
When the control law is designed to be constant, we have a deterministic evolution equation from \eqref{eq11} by taking $\bar{\rho_t}=E\{\rho_t\}$ as follows
\begin{equation}
\label{eq14_1}
\dot{\bar{\rho}}_t=-i[H_0+H_1+u^{(2)}H_2, \bar{\rho}_t]+\Gamma_A \mathcal{D}[A]\bar{\rho}_t.
\end{equation}
Calculating the derivative of the Lyapunov function in \eqref{eq14} based on \eqref{eq14_1}, we design the constant control law $u^{(2)}=1$ such that this derivative is non-positive (for details, see \cite{R15}).

The system state might leave from this state $\rho=\frac{1}{n} I_n$ after it converges because of the randomness caused by measurement. To deal with this problem, we define a set $S_{< 1-\gamma}$ containing this intermediate target state, i.e., $0\leq V(\frac{1}{n} I_n )< 1-\gamma,0\leq \gamma < \frac{1}{n^2}$. When the system state converges to $\rho =\frac{1}{n} I_n$ it accordingly enters this set. Thus, we divide the state space into $S_{< 1-\gamma}$ and $S_{\geq 1-\gamma}$, which simultaneously separates the target state with other entangled states; i.e., $V(\rho_d )=0\in S_{< 1-\gamma}$ and $V(\rho_t )=1 \in S_{\geq 1-\gamma}$ for other GHZ entangled states.

The second step is that when the system state converges to $\rho=\frac{1}{n} I_n$ or remains in the set $S_{< 1-\gamma}$, we design a control law that will drive the system to the target state. Therefore we use \eqref{eq12} as the Lyapunov function and calculate its infinitesimal generator:
\begin{equation}
\label{eq15}
\begin{aligned}
\mathcal{L}V(\rho_t ) &=2{\rm Tr}(\rho_d \rho_t ){\rm Tr}\left(i \left[H_0+H_1+H_2 u^{(2)},\rho_t \right] \rho_d \right)\\
&-\eta_A \Gamma_A {\rm Tr}^2 (\mathcal{H}[A]\rho_t \rho_d ).
\end{aligned}
\end{equation}
In order to design a control law that ensures $\mathcal{L}V(\rho_t )\leq 0$, we give a first condition that the control Hamiltonian $H_1$ should satisfy; i.e., $[H_0+H_1, \rho_d]=0$. Therefore the term $2{\rm Tr}(\rho_d \rho_t ){\rm Tr}\left(i [H_0+H_1,\rho_t] \rho_d \right)$ disappears for any $\rho_t$ and \eqref{eq15} is simplified as:
\begin{equation}
\label{eq16}
\begin{aligned}
\mathcal{L}V(\rho_t ) &=2{\rm Tr}(\rho_d \rho_t ){\rm Tr}\left(i \left[H_2 ,\rho_t \right] \rho_d \right)u^{(2)}\\
&-\eta_A \Gamma_A {\rm Tr}^2 (\mathcal{H}[A]\rho_t \rho_d ).
\end{aligned}
\end{equation}
Since the second term $-\eta_A \Gamma_A {\rm Tr}^2 (\mathcal{H}[A]\rho_t \rho_d )$ in \eqref{eq16} remains non-positive during the whole evolution, we choose $u^{(2)}=0$ to ensure $\mathcal{L}V(\rho_t )\leq 0$.

In consideration of the randomness caused by measurement, we design a switching control law as follows:
\begin{enumerate}
\item[1)] If $\rho_{t-\tau} \in S_{\geq {1-\frac{\gamma}{2}}}$ or $\rho_{t-\tau}$ enters $S_{\geq{1-\gamma}}\cap S_{<1-\frac{\gamma}{2}}$ from $S_{\geq{1-\frac{\gamma}{2}}}$, we use the control law $u^{(2)}=1$;
\item[2)] If $\rho_{t-\tau} \in S_{< {1-\gamma}}$ or $\rho_{t-\tau}$ enters $S_{\geq{1-\gamma}}\cap S_{<1-\frac{\gamma}{2}}$ from $S_{<{1-\gamma}}$, we use the control law $u^{(2)}=0$;
\item[3)] If the initial state is in $S_{\geq{1-\gamma}}\cap S_{<1-\frac{\gamma}{2}}$, $u^{(2)}=1$.
\end{enumerate}

The current state of the system usually cannot be obtained immediately due to the delay time in calculating the filter state. Hence, the delayed system state $\rho_{t-\tau}$ is used to switch the control law. The Lyapunov method gives a way to design control laws such that the derivative of the Lyapunov function remains non-positive, while the stability of the whole system should be proved by consideration of proper control Hamiltonians $H_1$ and $H_2$. In the next subsection, we construct the control Hamiltonians by analysing the stability in each set separately.

\subsection{Construction of the control Hamiltonians}
\label{subsec3.1}
When the system state converges to $\rho=\frac{1}{n}I_n$ or remains in $S_{<1-\gamma}$, the infinitesimal generator of the Lyapunov function \eqref{eq12} with the control law $u^{(2)}=0$  satisfies:
\begin{equation}
\label{eq27}
\mathcal{L}V(\rho_t) = - \eta_A \Gamma_A{\rm Tr}^2 (\mathcal{H}[A] \rho_t \rho_d )\leq 0.
\end{equation}
The stochastic LaSalle invariant principle \cite{R35} implies that the system state will converge to the invariant set contained in $\{\rho: \mathcal{L}V(\rho_t)=0\}$. Since the measurement efficiency and strength are normally not zero, we obtain
\begin{equation}
\label{eq33}
{\rm Tr} (\mathcal{H}[A] \rho_t \rho_d )=0.
\end{equation}
That is 
\begin{equation}
\label{eq35}
{\rm Tr}(\mathcal{H}[A] \rho_t \rho_d )=2(\lambda_d-{\rm Tr}(A\rho_t )) {\rm Tr}(\rho_d\rho_t )=0.
\end{equation}
Equation \eqref{eq35} means $\lambda_d={\rm Tr}(A\rho_t )$ or ${\rm Tr}(\rho_d \rho_t )=0$. Since $\rho_t$ is in $S_{<1-\gamma}$, the former holds, which means ${\rm Tr}(A\rho_t )$ is a constant in this process. We calculate $d{\rm Tr}(A\rho_t )$ as:
\begin{equation}
\label{eq36}
\begin{aligned}
d{\rm Tr}(A\rho_t )& =-i{\rm Tr}(A[H_0+H_1,\rho_t ])dt\\
& +2 \sqrt{\eta_A \Gamma_A} ({\rm Tr}(A^2 \rho_t )-{\rm Tr}^2 (A\rho_t ))dW_A\\
&\equiv 0.
\end{aligned}
\end{equation}
That means $-i{\rm Tr}(A[H_0+H_1,\rho_t ])=0$ and
\begin{equation}
\label{eq37}
{\rm Tr}(A^2 \rho_t )-{\rm Tr}^2 (A\rho_t )=0.
\end{equation} 
Equation \eqref{eq37} holds if and only if the system state converges to the $\lambda_d$-related eigenspace of $A$ based on Theorem 1 in \cite{R15}. In this eigenspace, the density operator has a special form, where all the elements in the density operator except $\rho_{11}$, $\rho_{nn}$, $\rho_{1n}$ and $\rho_{n1}$ are zero. Moreover, in this eigenspace we have $\mathcal{D}[A] \rho_t=0$ and $\mathcal{H}[A] \rho_t=0$. Hence, the evolution of the system state becomes:
\begin{equation}
\label{eq38}
d\rho_t=-i[H_0+H_1,\rho_t ]dt.
\end{equation}
We then construct the control Hamiltonian $H_1$ by guaranteeing that the target state $\rho_d$ is the only equilibrium of \eqref{eq38} (for details, see Appendix $A$ in \cite{R15}). After constructing $H_1$, we construct $H_2$ by a similar way in the following.

In the first control step, we use the control law $u^{(2)}=1$ to stabilise the intermediate target state $\rho=\frac{1}{n}I_n$. In this case, the evolution equation of the average states $\bar{\rho}_t$ becomes
\begin{equation}
\label{eq39}
d \bar{\rho}_t=-i[H_0+H_1+H_2,\rho_t ]dt+\Gamma_A \mathcal{D}[A] \bar{\rho}_t dt.
\end{equation}
The time derivative of the Lyapunov function \eqref{eq14} is $\dot{V}_1(\bar{\rho}_t)=-\Gamma_A\left\| \left[A, \bar{\rho}_t \right] \right\| _F^2 \leq 0$. According to the LaSalle invariant principle \cite{R31}, the average system state finally converges to the set $M$ contained in $\left\{\rho:\dot{V}_1 (\bar{\rho})=0\right\}=\left\{\bar{\rho}:[A,\bar{\rho}]=0\right\}$. In the set $M$, $\mathcal{D}[A] \bar{\rho}_t^M=0$, and \eqref{eq39} can be simplified as:
\begin{equation}
\label{eq40}
d \bar{\rho}_t^M=-i[H_0+H_1+H_2,\bar{\rho}_t^M ]dt.
\end{equation}
Similarly, we can construct the control Hamiltonian $H_2$ by guaranteeing that the intermediate target $\bar{\rho}=\frac{1}{n} I_n$ is the only equilibrium point of \eqref{eq40} (for details, see Appendix $B$ in \cite{R15}).

\subsection{Stability analysis}

The following theorem illustrates the stability of the stochastic system with the proposed time delay control strategy.

\begin{theorem}
\label{theorem2}
For the stochastic quantum system \eqref{eq11} with a constant delay time $\tau$, suppose the target state is as in \eqref{eq10}. The measured observable $A$ is given in \eqref{eq9}, and the control Hamiltonians $H_1$ and $H_2$ are constructed such that \eqref{eq38} and \eqref{eq40} have the only equilibrium points $\rho=\rho_d$ and $\rho=\frac{1}{n} I_n$, respectively. Then the following control law guarantees that the system state converges to the target state $\rho_d$ with probability $1$:
\begin{enumerate}
\item[1.] $u^{(2)}=1$ if $\rho_{t-\tau}\in S_{\geq 1-\frac{\gamma}{2}}$, or $\rho_{t-\tau}$ enters $S_{<1-\frac{\gamma}{2}}\cap S_{\geq 1-\gamma}$ from $S_{\geq 1-\frac{\gamma}{2}}$;
\item[2.] $u^{(2)}=0$ if $\rho_{t-\tau}\in S_{<1-\gamma}$, or $\rho_{t-\tau}$ enters $S_{<1-\frac{\gamma}{2}}\cap S_{\geq 1-\gamma}$ from $S_{<1-\gamma}$;
\item[3.] $u^{(2)}=1$ if the initial state is in $S_{<1-\frac{\gamma}{2}}\cap S_{\geq 1-\gamma}$.
\end{enumerate}
\end{theorem}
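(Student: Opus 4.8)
The plan is to stitch together the two convergence analyses prepared in Section~\ref{subsec3.1} into a single almost-sure statement, and then to show that the hysteresis band $S_{<1-\frac{\gamma}{2}}\cap S_{\geq 1-\gamma}$ is precisely what lets the delayed switching rule inherit this convergence. I would separate the argument into a \emph{closing} phase, where $u^{(2)}=0$, and a \emph{driving} phase, where $u^{(2)}=1$, and finally analyse the switching logic itself.

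For the closing phase, suppose the state lies in $S_{<1-\gamma}$ so that rule~2 gives $u^{(2)}=0$. Then \eqref{eq27} shows $\mathcal{L}V(\rho_t)\leq 0$, so $V(\rho_t)$ is a nonnegative supermartingale. Applying the stochastic LaSalle invariance principle \cite{R35}, the trajectory converges almost surely to the largest invariant set contained in $\{\rho:\mathcal{L}V(\rho)=0\}$. The chain \eqref{eq33}--\eqref{eq38}, together with the hypothesis that $\rho_d$ is the unique equilibrium of \eqref{eq38}, identifies this invariant set with the single point $\rho_d$. Hence, as long as the state remains in $S_{<1-\gamma}$, it converges to $\rho_d$ with probability $1$; the role of the partition is to exclude the competing GHZ states, which have $V=1$ and therefore lie in $S_{\geq 1-\gamma}$.

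For the driving phase, suppose instead the state lies in $S_{\geq 1-\frac{\gamma}{2}}$ so that rule~1 gives $u^{(2)}=1$. Here I would work with the Lyapunov function $V_1$ of \eqref{eq14} and the averaged dynamics \eqref{eq39}, for which $\dot{V}_1(\bar{\rho}_t)=-\Gamma_A\|[A,\bar{\rho}_t]\|_F^2\leq 0$. The LaSalle invariance principle \cite{R31} together with the hypothesis that $\frac{1}{n}I_n$ is the unique equilibrium of \eqref{eq40} forces the averaged state to $\frac{1}{n}I_n$; an analogous stochastic LaSalle argument then yields almost-sure convergence of $\rho_t$ towards $\frac{1}{n}I_n$. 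Since $V(\frac{1}{n}I_n)<1-\gamma$, the trajectory enters $S_{<1-\gamma}$ with probability $1$, at which point the closing phase takes over.

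The crux is the switching logic in the presence of the delay $\tau$, since the control at time $t$ is decided by $\rho_{t-\tau}$ rather than by $\rho_t$. I would bound the increment of $V(\rho_t)$ accumulated over any window of length $\tau$ and argue that, once $\rho_{t-\tau}$ has been detected in $S_{<1-\gamma}$ and $u^{(2)}$ committed to $0$, this increment cannot carry $V(\rho_t)$ above the threshold $1-\frac{\gamma}{2}$; the hysteresis rule (items~1--3, which retain the mode active on entry to the band $S_{<1-\frac{\gamma}{2}}\cap S_{\geq 1-\gamma}$) then prevents any spurious switch back to $u^{(2)}=1$. This shows the state is eventually absorbed into $S_{<1-\gamma}$ and rules out chattering across the boundary, so the two phases chain together to give convergence to $\rho_d$ with probability $1$. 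I expect this last step to be the main obstacle: making the excursion bound over the delay window rigorous in the stochastic setting, and verifying that the half-width $\frac{\gamma}{2}$ of the hysteresis band dominates it, so that the delayed switching faithfully reproduces the undelayed convergence.
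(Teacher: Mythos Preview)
Your decomposition into a closing phase ($u^{(2)}=0$) and a driving phase ($u^{(2)}=1$) matches Steps~3 and~1 of the paper's proof, and the LaSalle arguments you sketch there are essentially the same as the paper's. The genuine gap is in your proposed handling of the switching. You plan to bound the increment of $V(\rho_t)$ over a window of length $\tau$ and conclude that the hysteresis band ``prevents any spurious switch back to $u^{(2)}=1$''. This cannot be made to work: under $u^{(2)}=0$ the process $V(\rho_t)$ is a nonnegative supermartingale, but its diffusion term, driven by $dW_A$, admits no almost-sure bound over any positive time interval, so there is no deterministic excursion bound for the half-width $\tfrac{\gamma}{2}$ to dominate. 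More importantly, the conclusion you are aiming for is simply false: from any state in $S_{<1-\gamma}$ the trajectory \emph{can} cross back into $S_{\geq 1-\frac{\gamma}{2}}$ with positive probability, which legitimately triggers a switch to $u^{(2)}=1$.

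The paper's Step~2 takes a different route: it does not try to rule out switch-backs, only to show they occur finitely often. Starting from any $\rho_0\in S_{<1-\gamma}$ with $u^{(2)}=0$, the supermartingale maximal inequality gives
\[
P\Bigl\{\sup_{t}V\bigl(\Phi(\rho_0,0)\bigr)\geq 1-\tfrac{\gamma}{2}\Bigr\}\leq \frac{1-\gamma}{1-\tfrac{\gamma}{2}}=:1-p<1,
\]
so each closing attempt remains in $S_{<1-\frac{\gamma}{2}}$ forever with probability at least $p>0$. Since the driving phase (Step~1) returns the state to $S_{<1-\gamma}$ after any excursion, the attempts are repeated independently in this sense, and the Borel--Cantelli lemma yields that the number of switches is almost surely finite; thereafter the trajectory is trapped with $u^{(2)}=0$ and Step~3 applies. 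Note also that the hysteresis band is there to absorb the \emph{stochastic} fluctuations, not the delay per se: because both control values are constants, the delay merely shifts the switching instants and does not enter the supermartingale or Borel--Cantelli estimates.
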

\begin{proof}
The proof includes three steps.
\begin{enumerate}[fullwidth,itemindent=2em]
\item[Step 1.] When $\rho_{t-\tau}\in S_{\geq 1-\frac{\gamma}{2}}$ or $\rho_{t-\tau}$ enters $S_{\geq 1-\gamma}\cap S_{<1-\frac{\gamma}{2}}$ from $S_{\geq 1-\frac{\gamma}{2}}$, we use $u^{(2)}=1$ such that the system state converges to the intermediate target state $\rho=\frac{1}{n}I_n$. The infinitesimal generator of the Lyapunov function \eqref{eq14} is non-positive such that the system state finally converges to the set $M$. In this set, the evolution equation of system state is simplified as \eqref{eq40}. The control Hamiltonians $H_1$ and $H_2$ ensure that \eqref{eq40} has unique equilibrium point $\rho=\frac{1}{n}I_n$, which means the system state converges to $\rho=\frac{1}{n}I_n$ in probability. We first present the following conclusion (the detailed proof is presented in Appendix).

\begin{proposition}
\label{proposition1}
For the stochastic quantum system \eqref{eq7}, assume that the continuous and differentiable function $V(\rho_t)$ is a given Lyapunov function, and satisfies $0\leq V(\rho_t )\leq 1$. Also, assume the target state $\rho_d$ satisfies $V(\rho_d )=0$. If the system state converges to the target state $\rho_d$ in probability, then it must converge to $\rho_d$ with probability $1$, vice versa.
\end{proposition}

According to Proposition \ref{proposition1}, the system state converges to $\rho=\frac{1}{n}I_n$ with probability $1$; i.e., the system state enters $S_{<1-\gamma}$ with probability $1$.

\item[Step 2.] When $\rho_{t-\tau}\in S_{<1-\gamma}$ or $\rho_{t-\tau}$ enters $S_{\geq 1-\gamma}\cap S_{<1-\frac{\gamma}{2}}$ from $S_{<1-\gamma}$, we apply the control law $u^{(2)}=0$. With the consideration of randomness, the system state that has converged to $\rho=\frac{1}{n}I_n$ or entered $S_{<1-\gamma}$ might leave from this intermediate state or this set. We need to prove that states under the control law $u^{(2)}=0$ finally remain in set $S_{<1-\frac{\gamma}{2}}$ with probability $1$.

We denote the solution of the system \eqref{eq11} with control $u^{(2)}$ and initial state $\rho=\rho_0$ as $\Phi(\rho_0,u^{(2)} )$. As we consider the fact that states are in $S_{<1-\gamma}$, $V(\rho_0 )<1-\gamma$ holds. Based on \eqref{eq27}, we have \cite{R35}:
\begin{equation}
\label{eq43}
\begin{aligned}
& P\left\{{\rm sup}_{t\geq \tau}V(\Phi(\rho_0,u^{(2)}))\geq 1-\frac{\gamma}{2}
\right\}\\
& \leq \frac{1-\gamma}{1-\frac{\gamma}{2}} \triangleq 1-p.
\end{aligned}
\end{equation}
In \eqref{eq43}, $0<\gamma<1$ and $0<1-\frac{\gamma}{2}<1-\gamma$. Hence, we have $1-p<1,0<p<1$, and \eqref{eq43} means 
$$P\left\{{\rm sup}_{t\geq -\tau}V(\Phi(\rho_0,u^{(2)}))<1-\frac{\gamma}{2}\right\}\geq p ;$$
i.e., the system trajectory remains in $S_{<1-\frac{\gamma}{2}}$ with the probability that is greater than or equals to $p$. According to the Borel-Centelli lemma \cite{R20}, the switching times between $S_{<1-\frac{\gamma}{2}}$ and $S_{\geq 1-\frac{\gamma}{2}}$ are finite. Hence, the system state remains in $S_{<1-\frac{\gamma}{2}}$ with probability $1$. %It should be noted that the value of $\gamma$ will be smaller as the dimension goes higher, which causes that the set $S_{<1-\gamma}$ contains more states. Hence, the probability $1-p$ in \eqref{eq43} becomes bigger, which results in more switching times. That means the convergent time becomes longer as the system dimension goes higher. It is the intermediate target that poses this strict condition on $\gamma$, it is possible to design an improved control strategy in the first step to further speed up the convergence.

\item[Step 3.] The final step is to prove the system states that remain in $S_{<1-\frac{\gamma}{2}}$ will converge to $\rho_d$ with probability $1$ under the influence of the control law $u^{(2)}=0$.

When we apply the control $u^{(2)}=0$, the infinitesimal generator of the Lyapunov function \eqref{eq12} satisfies \eqref{eq27}. Then, the system state will converge to $\left\{\rho_t: \mathcal{L}V (\rho_t )=0 \right\}$ with probability $1$ according to \eqref{eq36} and \eqref{eq37}. In this set, the evolution equation becomes \eqref{eq38}, and $\rho_d$ is the only stable equilibrium point of \eqref{eq38}. The stochastic LaSalle invariant principle ensures that the target state $\rho_d$ is also attractive; i.e., the system state converges to $\rho_d$ with probability $1$. $\blacksquare$
\end{enumerate}
\end{proof}

\section{Switching Lyapunov control with time delay}
\label{sec4}
In this section, we propose another control strategy switching between the constant value $1$ and the Lyapunov feedback control law, which has the potential to speed up convergence to the target state compared with the bang-bang-like case. 

When the system state is in $S_{\geq 1-\gamma}$, the control law has the constant value $1$ as in the bang-bang-like case. Hence, we only consider the case where the state is in the $S_{< 1-\gamma}$ in the following. When the system state converges to $\rho=\frac{1}{n} I_n$ or remains in the set $S_{< 1-\gamma}$, the derivative of the Lyapunov function $V(\rho)$ satisfies \eqref{eq16}. For delay-free cases, the control law is designed as $u^{(2)}=-k{\rm Tr}(i[H_2,\rho_t]\rho_d)$ (\cite{R12}, \cite{R15}) to ensure that the first term $2{\rm Tr}(\rho_d \rho_t ){\rm Tr}\left(i \left[H_2 ,\rho_t \right] \rho_d \right)u^{(2)}$ remains non-positive, where $k>0$ is the control gain and is limited by the bound of the control law $u^{(2)}$. However, usually we need some time to compute the estimate state based on the filter equation, especially for high-dimensional systems. This implies that we usually cannot obtain the system state $\rho_t$ at current time $t$ as the feedback information to calculate the control signal. Instead, the system state $\rho_{t-\tau}$ is used to design the control law; i.e., we design a control law as:
\begin{equation}
\label{eq44}
u^{(2)}=-k{\rm Tr}(i[H_2,\rho_{t-\tau}]\rho_d).
\end{equation}
Similar to the bang-bang-like case, we design a control law with switching between the constant value $1$ and a delayed Lyapunov feedback control law as follows:
\begin{enumerate}
\item[1.] $u^{(2)}=1$ if $\rho_{t-\tau}\in S_{\geq 1-\frac{\gamma}{2}}$, or $\rho_{t-\tau}$ enters $S_{<1-\frac{\gamma}{2}}\cap S_{\geq 1-\gamma}$ from $S_{\geq 1-\frac{\gamma}{2}}$;
\item[2.] $u^{(2)}=-k{\rm Tr}(i[H_2,\rho_{t-\tau}]\rho_d)$ if $\rho_{t-\tau}\in S_{<1-\gamma}$, or $\rho_{t-\tau}$ enters $S_{<1-\frac{\gamma}{2}}\cap S_{\geq 1-\gamma}$ from $S_{<1-\gamma}$;
\item[3.] $u^{(2)}=1$ if the initial state is in $S_{<1-\frac{\gamma}{2}}\cap S_{\geq 1-\gamma}$.
\end{enumerate}
The stability proof with the control law $u^{(2)}=1$ is the same as that in the bang-bang-like case. Normally, the delay time $\tau$ in the control law $u^{(2)}=-k{\rm Tr}(i[H_2,\rho_{t-\tau}]\rho_d)$ might cause the first term in \eqref{eq16} to be non-negative, which cannot always ensure the decrease of the Lyapunov function $V(\rho)$. Hence, stability with the time-delay control law $u^{(2)}=-k{\rm Tr}(i[H_2,\rho_{t-\tau}]\rho_d)$ needs to be proved. A delay-dependent criterion for differential stochastic equation has been proposed in \cite{R36} , which is described as follows.

\begin{lemma}\cite{R36}
\label{lemma1}
Let $f(\cdot,\cdot): \mathbb{R}^n \times \mathbb{R}^n \rightarrow \mathbb{R}^n$, $g(\cdot):\mathbb{R}^n \rightarrow \mathbb{R}^n$, be polynomials and $\mathsf{C}$ a bounded semi-algebraic set in $\mathbb{R}^n$ such that for any initial condition $\tilde{x}_i \in C_{\mathsf{C}}^{\tau}$ the solution to the delay differential stochastic equation
\begin{equation}
\label{eq45}
\begin{aligned}
dx_t&=f(x_t,x_{t-\tau})dt+g(x_t)dw_t\\
x_{\theta}&=\tilde{x}_i(\theta)\in \mathsf{C}, \theta \in [-\tau,0],
\end{aligned}
\end{equation}
does not exit $\mathsf{C}$ almost surely. Suppose there exist a polynomial $V_*(\cdot)$ which is positive in $\mathbb{C}$, $n$-variable polynomials $V_i(i=0,1), S\in \mathbb{R}^{2n\times n}$, and positive-definite matrices $R,T\in \mathbb{R}^{n\times b}$ such that $\Upsilon$ defined below is negative in $\mathsf{C}\times \mathsf{C}\times \mathbb{R}^{2n}$
\begin{equation}
\label{eq46}
\begin{aligned}
F(x,x_d)&:=\left(\frac{\partial V_0(x)}{\partial x} \right)^T f(x,x_d)\\
&+\frac{1}{2}g(x)^T \frac{\partial}{\partial x}\left(\frac{\partial V_0(x)}{\partial x} \right)^T g(x)\\
&+V_1(x)-V_1(x_d)+V_*(x)+\tau \|g(x)\|_T^2\\
&+2\begin{bmatrix}
x^T & x_d^T
\end{bmatrix}S(x-x_d)+\tau\|f(x,x_d) \|_R^2,\\
\Upsilon(x,x_d,y)&:=F(x,x_d)+\\
&\begin{bmatrix}
x\\
x_d\\ \hline
y
\end{bmatrix}^T \times \left[\begin{array}{c|c c}
0 & S & \tau S\\ \hline
S^T & -T & 0\\
\tau S^T & 0 & -\tau R 
\end{array}\right] \times \begin{bmatrix}
x\\
x_d\\ \hline
y
\end{bmatrix},
\end{aligned}
\end{equation}
Then, $V_*(x_t)$ converges to $0$ almost surely for any initial condition $\tilde{x}_i \in C_{\mathsf{C}}^{\tau}$.
\end{lemma}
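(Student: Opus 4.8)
The plan is to establish this by a stochastic Lyapunov--Krasovskii argument closed off with a supermartingale (stochastic LaSalle) convergence step. First I would assemble an augmented functional that promotes $V_0$ to a genuine Lyapunov--Krasovskii functional on the path space,
\begin{equation*}
W(x_t)=V_0(x_t)+\int_{t-\tau}^{t}V_1(x_s)\,ds+\int_{-\tau}^{0}\!\!\int_{t+\theta}^{t}\big(\|f(x_s,x_{s-\tau})\|_R^2+\|g(x_s)\|_T^2\big)\,ds\,d\theta ,
\end{equation*}
where the single integral of $V_1$ is designed so that its generator produces exactly the telescoping term $V_1(x)-V_1(x_d)$ that already appears in $F$, and the two double integrals are tailored so that their generators supply the delay-dependent contributions $\tau\|f\|_R^2$ and $\tau\|g\|_T^2$ together with the interval integrals $-\int_{t-\tau}^{t}(\cdots)\,ds$ that are to be dominated later.

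Next I would apply the infinitesimal generator (Itô's formula for functionals, i.e. Dynkin's formula) to $W$ along solutions of \eqref{eq45}. The drift part of $\mathcal{L}V_0$ reproduces $(\partial V_0/\partial x)^T f$ and the Itô correction reproduces $\tfrac12\,g^T\partial_x(\partial V_0/\partial x)^T g$, matching the first two lines of $F$ in \eqref{eq46}. The essential delay manipulation is the Newton--Leibniz identity $x_t-x_{t-\tau}=\int_{t-\tau}^{t} f\,ds+\int_{t-\tau}^{t} g\,dw_s$, which I would inject through the free-weighting matrix $S$: the bilinear expression $2[x^T\ x_d^T]S(x-x_d)$ is added as an identically admissible quantity and then recast in the augmented coordinate, so that the cross-couplings between $x$, $x_d$ and the increment are absorbed precisely into the block matrix with diagonal blocks $-T$ and $-\tau R$ displayed in $\Upsilon$.

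The crux is to account for the random delay increment without bounding the stochastic integral $\int_{t-\tau}^{t} g\,dw_s$ pathwise, and this is exactly where the free variable $y\in\mathbb{R}^{2n}$ enters. By demanding $\Upsilon(x,x_d,y)<0$ for \emph{all} $y$, the estimate becomes robust to whatever value the (random) increment quantity represented by $y$ actually takes, so I may substitute the realized $y=y^\ast$ and still retain strict negativity. Completing the square in $y$ against the $-\tau R$ and $-T$ blocks realizes the Jensen-type bound $\|\!\int_{t-\tau}^{t} f\|_R^2\le\tau\!\int_{t-\tau}^{t}\|f\|_R^2\,ds$ and its diffusion analogue via the Itô isometry, so that the interval integrals generated by the double-integral terms of $W$ cancel. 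Collecting everything yields
\begin{equation*}
\mathcal{L}W(x_t)\le \Upsilon(x_t,x_{t-\tau},y^\ast)-V_*(x_t)<-V_*(x_t)\le 0 .
\end{equation*}

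Finally I would convert this pointwise generator inequality into almost-sure convergence. Since $\mathsf{C}$ is bounded and the solution does not exit $\mathsf{C}$ almost surely, $W$ is bounded below and $V_*$ is bounded and uniformly continuous along trajectories; Dynkin's formula then gives $E\!\int_0^\infty V_*(x_s)\,ds\le W(x_0)<\infty$, whence $\int_0^\infty V_*(x_s)\,ds<\infty$ almost surely and a Barbalat/stochastic-LaSalle argument forces $V_*(x_t)\to0$ a.s. The hard part will be the rigorous treatment of the diffusion term over the delay window: unlike the deterministic Jensen step applied to $\int_{t-\tau}^{t} f\,ds$, the martingale increment $\int_{t-\tau}^{t} g\,dw_s$ must be controlled through the Itô isometry and the generator of the $g^T T g$ double integral, and one must verify that the universal quantification over $y$ delivers a genuine almost-sure bound rather than merely a bound in expectation.
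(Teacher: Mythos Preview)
The paper does not actually prove this lemma: it is quoted from \cite{R36} and invoked as a black box, so there is no in-paper proof to compare against. What the paper does do is rerun the same machinery in the special two-qubit case inside the proof of Theorem~\ref{theorem3}: it builds the Lyapunov--Krasovskii functional \eqref{eq49} (your $W$ with $V_0=1-\mathrm{Tr}(\rho\rho_d)$, $V_1=|\mu|^2$, and the double integral of $r|f|^2+16\epsilon|\mu|^2$), computes its generator \eqref{eq50}, injects the free-weighting relation \eqref{eq51}--\eqref{eq52} coming from the Newton--Leibniz identity, and then takes expectations and uses the It\^o isometry in \eqref{eq55} to close the estimate. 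Your outline is exactly this program in the general setting, so it is the right approach.

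One comment on the delicate point you already flag at the end. In the paper's concrete computation the diffusion increment is handled \emph{in expectation} via the It\^o isometry (see \eqref{eq55} and \eqref{eq57}), not pathwise, and the conclusion $\mathbb{E}\mathcal{L}V_2\le -\beta\,\mathbb{E}\tilde\mu_t(0)^2$ is then upgraded to almost-sure convergence by appeal to an auxiliary result (Lemma~1 of \cite{R19}). Your sketch instead argues that the universal quantification over $y$ lets one substitute the realized random increment and obtain a genuine pathwise bound $\mathcal{L}W\le -V_*$ before integrating. That pathwise route is cleaner if it goes through, but you should make sure the identity you add via $S$ really equals zero pathwise once the stochastic integral $\int_{t-\tau}^{t}g\,dw_s$ is present; the paper's version \eqref{eq52} hides the stochastic increment inside the quantity $\tilde\mu(0)-\tilde\mu(-\tau)-\int\tilde f\,ds$ and only controls its square after taking expectation. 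Either route works, but the bookkeeping differs.
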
 
Using Lemma \ref{lemma1}, we can define the positive polynomial $V_*(\cdot)$ as a distance function between the system states and the target state. Then this lemma states a sufficient condition to ensure that the system state $x_t$ converges to the target state if a semi-algebraic problem is feasible. Semi-algebraic problems are in general NP-hard and the possible solutions are not unique. Hence, it seems that it is not easy to give an analytical solution to this problem. The MATLAB SOS-TOOLS has been used to check the feasibility for a special spin system in \cite{R36}. Inspired by the proof of Lemma \ref{lemma1} in \cite{R36} and the proof of Proposition 2 in \cite{R19}, we provide a theoretical analysis of the stability of a special two-qubit system as an example in the following sub-section.

\subsection{A two-qubit example}
\label{subsec4.1} 
We write the density matrix of a two-qubit system as 
\begin{equation}
\label{eq47}
\rho=\begin{bmatrix}
\nu_1 & \lambda_1-i\mu_1 & \lambda_2-i\mu_2 & \lambda_3-i\mu_3\\
\lambda_1+i\mu_1 & \nu_2 & \lambda_4-i\mu_4 & \lambda_5-i\mu_5\\
\lambda_2+i\mu_2 & \lambda_4+i\mu_4 & \nu_3 & \lambda_6-i\mu_6\\
\lambda_3+i\mu_3 & \lambda_5+i\mu_5 & \lambda_6+i\mu_6 & 1-\nu_1-\nu_2-\nu_3
\end{bmatrix},
\end{equation} with all $\nu_i, \lambda_i$ and $\mu_i$ being real and scalar.

Suppose the free Hamiltonian is $H_0={\rm diag}[1,-1,-1,1]$, and the target state is the following Bell state
\begin{equation}
\label{eq47.1}
\rho_d=\frac{1}{2}\left[\begin{smallmatrix}
1 & 0 &  0 & 1\\
0 & 0 & 0 & 0\\
0 & 0 & 0 & 0 \\
1 & 0 & 0 &1
\end{smallmatrix}\right].
\end{equation}
According to \eqref{eq9} and the conditions that control Hamiltonians should satisfy, we choose the observable operator as $A=\sigma_z \otimes \sigma_z={\rm diag}[1,-1,-1,1]$, and control Hamiltonians as
$$H_1=I\otimes \sigma_x-\sigma_x \otimes I=\left[\begin{smallmatrix}
0& 1& -1& 0\\
1& 0& 0& -1\\
-1& 0& 0& 1\\
0& -1& 1& 0\\
\end{smallmatrix}\right],$$
$H_2=\sigma_z \otimes I={\rm diag}[1,1,-1,-1]$. Substituting these two control Hamiltonians into \eqref{eq38} and \eqref{eq40}, it can be verified that $\rho=\rho_d$ and $\rho=\frac{1}{n} I_n$ are the only equilibrium points of \eqref{eq38} and \eqref{eq40}, respectively.

Substituting the control Hamiltonians $H_1$ and $H_2$ into \eqref{eq11} and doing straightforward calculations, we may write the dynamical equation for $\mu_3$ as:
\begin{equation}
\label{eq48}
\begin{aligned}
d\mu_3(t)&=(\lambda_1(t)-\lambda_2(t)+\lambda_5(t)-\lambda_6(t)+2\lambda_3(t) u^{(2)})dt\\
&+4\mu_3(t)(\nu_2(t)+\nu_3(t))dW_t.
\end{aligned}
\end{equation}
The control law is designed as $u^{(2)}=-k{\rm Tr}(i[H_2,\rho_{t-\tau}]\rho_d)=-2k\mu_3(t-\tau)$. With this control law, the stability of this time-delay stochastic system is then described by the following theorem.
\begin{theorem}
\label{theorem3}
For a given delay time $\tau\geq 0$, suppose that there exists a feedback gain $k>0$ such that the following inequality has solution for $r,q,\epsilon>0, S\in \mathbb{R}^{3\times 1}$:
\begin{equation}
\label{eq1}
\left[\begin{smallmatrix}
M+S\tilde{S} & S & \tau S\\
S^T & -\epsilon & 0\\
\tau S^T & 0 & -\tau r
\end{smallmatrix}\right]<0,
\end{equation}
$$
M=\left[\begin{smallmatrix}
q+\tau \epsilon & -2k & 0\\
-2k & -q & 0\\
0 & 0 & \tau r
\end{smallmatrix}\right],  \quad \tilde{S}=\begin{bmatrix} 2 & -2 & 0 \end{bmatrix}.$$ 
Then the following switching control law globally stabilises the target state $\rho_d$:
\begin{enumerate}
\item[1.] $u^{(2)}=1$ if $\rho_{t-\tau}\in S_{\geq 1-\frac{\gamma}{2}}$, or $\rho_{t-\tau}$ enters $S_{<1-\frac{\gamma}{2}}\cap S_{\geq 1-\gamma}$ from $S_{\geq 1-\frac{\gamma}{2}}$;
\item[2.] $u^{(2)}=-k{\rm Tr}(i[H_2,\rho_{t-\tau}]\rho_d)$ if $\rho_{t-\tau}\in S_{<1-\gamma}$, or $\rho_{t-\tau}$ enters $S_{<1-\frac{\gamma}{2}}\cap S_{\geq 1-\gamma}$ from $S_{<1-\gamma}$;
\item[3.] $u^{(2)}=1$ if the initial state is in $S_{<1-\frac{\gamma}{2}}\cap S_{\geq 1-\gamma}$.
\end{enumerate}
\end{theorem}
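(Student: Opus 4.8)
The plan is to handle the two branches of the switching law separately and then glue them, with essentially all of the difficulty sitting in the feedback branch. For the constant branch $u^{(2)}=1$, active on $S_{\geq 1-\frac{\gamma}{2}}$ (and on the overlap $S_{<1-\frac{\gamma}{2}}\cap S_{\geq 1-\gamma}$ when entered from above), I would reuse verbatim the argument of Theorem \ref{theorem2}: the averaged dynamics \eqref{eq40} has $\rho=\frac{1}{n}I_n$ as its unique equilibrium, so the Lyapunov function \eqref{eq14} drives the state into $S_{<1-\gamma}$, and Proposition \ref{proposition1} upgrades convergence in probability to convergence with probability $1$. The novelty is confined to the branch $u^{(2)}=-2k\mu_3(t-\tau)$ active on $S_{<1-\gamma}$, where I must show the trajectory still reaches $\rho_d$ in spite of the delay.

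Second, I would reduce the matrix SDE \eqref{eq11} under the feedback branch to the scalar equation \eqref{eq48} for the imaginary coherence $\mu_3$. By the LaSalle computation of Section \ref{sec3} (see \eqref{eq33}--\eqref{eq38}), the diagonal observable $A$ together with the continuous measurement confines the state to the $\lambda_d$-eigenspace, on which all entries other than $\nu_1,\nu_4,\lambda_3,\mu_3$ vanish; there the couplings $\lambda_1-\lambda_2+\lambda_5-\lambda_6$ and the diffusion coefficient $4\mu_3(\nu_2+\nu_3)$ in \eqref{eq48} drop out, while $H_1$ fixes the remaining populations and real coherence through the unique-equilibrium property of \eqref{eq38}. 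The quantity the feedback must still annihilate is therefore $\mu_3$, and I would recast the goal $\rho_t\to\rho_d$ as $\mu_3\to0$ almost surely, retaining the diffusion term because the trajectory only approaches, and does not sit exactly on, this eigenspace.

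Third, and this is the crux, I would invoke Lemma \ref{lemma1} with state $x=\mu_3$, taking $V_*$ to be a positive multiple of $\mu_3^2$ as the distance to the target. The obstruction is that the naive generator of a candidate $V_0(\mu_3)=\mu_3^2$ produces the sign-indefinite cross term $2\mu_3(t)\cdot(-2k\mu_3(t-\tau))$, which the delay prevents from being controlled pointwise. Following the Lyapunov--Krasovskii construction behind Lemma \ref{lemma1} and Proposition~2 of \cite{R19}, I would augment $V_0$ with a single integral term carrying coefficient $V_1$ and with double-integral delay terms weighted by $R$ and $T$; writing $\mu_3(t)-\mu_3(t-\tau)=\int_{t-\tau}^{t}d\mu_3$ and introducing the free-weighting column $S$ together with an auxiliary variable $y$ for the drift, the indefinite term and the delay difference are absorbed through the bounds $\tau\|f\|_R^2$ and $\tau\|g\|_T^2$. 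Collecting the resulting quadratic form in $z=[\mu_3(t),\,\mu_3(t-\tau),\,y]^\top$ and demanding it be negative definite reduces exactly to the matrix inequality \eqref{eq1}, with $(q,\epsilon,r)$ the weights of $(V_1,T,R)$ and $\tilde S=[\,2\ -2\ 0\,]$ encoding $2(\mu_3(t)-\mu_3(t-\tau))$. I expect the hard part to be precisely this matching: verifying that the delay-induced and diffusion contributions collapse into the blocks of \eqref{eq1}, and checking that the residual coupling terms dropped in the reduction are dominated, so that feasibility of \eqref{eq1} yields $\mathcal{L}V_{\mathrm{total}}\leq -V_*<0$ and hence $\mu_3\to0$ almost surely.

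Finally, I would glue the branches. Since each excursion under the feedback branch keeps the state in $S_{<1-\frac{\gamma}{2}}$ with probability at least $p$ (the estimate \eqref{eq43} is independent of which admissible control is applied on $S_{<1-\gamma}$), the Borel--Cantelli lemma again makes the number of switchings finite almost surely, so the trajectory ultimately remains in $S_{<1-\frac{\gamma}{2}}$ and, by the third step together with the unique-equilibrium property of \eqref{eq38}, converges to $\rho_d$ with probability $1$. This establishes the claimed global stabilisation.
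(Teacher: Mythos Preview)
Your overall architecture matches the paper's, but the Lyapunov--Krasovskii step has a genuine gap in the choice of $V_0$. Taking $V_0(\mu_3)=\mu_3^2$ does \emph{not} produce the clean cross term $-4k\mu_3(t)\mu_3(t-\tau)$ you claim: from \eqref{eq48} the generator of $\mu_3^2$ is $2\mu_3 f(t)+g(t)^2$, and $f(t)$ carries both the coupling $\lambda_1-\lambda_2+\lambda_5-\lambda_6$ and the factor $2\lambda_3$ in front of $u^{(2)}$, so you get $-8k\lambda_3(t)\mu_3(t)\mu_3(t-\tau)$ plus cross terms in the $\lambda_i$, together with a state-dependent $g^2$ contribution. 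None of these fit into the $3\times 3$ block $M$ of \eqref{eq1}. The paper sidesteps this entirely by taking $V_0=1-\mathrm{Tr}(\rho_t\rho_d)=\tfrac12+\tfrac12\nu_2+\tfrac12\nu_3-\lambda_3$, a function of the \emph{full} state whose drift generator collapses exactly to $-4k\mu_3(t)\mu_3(t-\tau)$ (this is why the off-diagonal entry of $M$ is $-2k$ rather than something involving $\lambda_3$). Relatedly, you cannot invoke Lemma~\ref{lemma1} with state $x=\mu_3$, because the drift and diffusion in \eqref{eq48} depend on variables outside $(\mu_3(t),\mu_3(t-\tau))$; the paper does not apply the lemma, it imitates its proof with the tailored functional \eqref{eq49}, keeps the full $f,g$ in the double integrals, and only at the end uses the uniform bound $(\nu_2+\nu_3)\le 1$ together with It\^o isometry (equation \eqref{eq55}) to absorb the diffusion.

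Your step~2 also inverts the logical order. Invoking the LaSalle reduction of Section~\ref{sec3} to confine the state to the $\lambda_d$-eigenspace presupposes $\mathcal{L}V\le 0$, which is precisely what the delayed feedback spoils and what you must establish. The paper runs the argument the other way: first prove $\mu_3\to 0$ almost surely from the functional \eqref{eq49} and feasibility of \eqref{eq1}; then feed this into \eqref{eq60} to recover $\mathcal{L}V(\rho_t)\le -\eta_A\Gamma_A\mathrm{Tr}^2(\mathcal{H}[A]\rho_t\rho_d)\le 0$; and only then carry over Steps~2--3 of Theorem~\ref{theorem2} (the supermartingale estimate \eqref{eq43}, Borel--Cantelli, and the unique-equilibrium property of \eqref{eq38}). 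Your ``residual coupling terms dropped in the reduction are dominated'' is exactly the point where the argument would fail with $V_0=\mu_3^2$; the correct $V_0$ makes them vanish identically so nothing needs to be dominated.
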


\begin{proof}
Compared with the stability proof in the bang-bang-like case, the only difference here is that we cannot ensure that \eqref{eq16} is always non-positive due to the delay-dependent term $-2k{\rm Tr}(\rho_t \rho_d){{\rm Tr}(i[H_2,\rho_t]\rho_d)\rm Tr}(i[H_2,\rho_{t-\tau}]\rho_d)$. Hence, we need to find conditions such that \eqref{eq16} remains non-positive under this delay control law in the following proof. Once these conditions are satisfied, the proof of bang-bang-like case carries over directly.  

We use the notation: $\tilde{(\cdot)}_t(\theta)=(\cdot)(t+\theta), \theta \in [-2\tau,0]$ hereafter, and denote $\lambda_1(t)-\lambda_2(t)+\lambda_5(t)-\lambda_6(t)+2\lambda_3(t) u^{(2)}=f(t)$ and $\mu_3(t)(4\nu_2(t)+4\nu_3(t))=g(t)$. For simplicity, we also write $\mu_3(t)$ and $\mu_3(t-\tau)$ as $\tilde{\mu}_t(0)$ and $\tilde{\mu}_t(-\tau)$, respectively. Also, suppose the initial conditions are chosen without loss of generality as $f_t=0$, and $\mu_t=-\frac{1}{k}$ for $t\in [-\tau, 0)$.

Since $1-{\rm Tr}(\rho_t \rho_d)=\frac{1}{2}\tilde{\nu}_2(0)+\frac{1}{2}\tilde{\nu}_3(0)-\tilde{\lambda}_3(0)+\frac{1}{2}$, we construct a Lyapunov function with positive $q,r$ and $\epsilon$ as follows:
\begin{equation}
\label{eq49}
\begin{aligned}
V_2&=\frac{1}{2}\tilde{\nu}_2(0)+\frac{1}{2}\tilde{\nu}_3(0)-\tilde{\lambda}_3(0)+\frac{1}{2}\\
& +q\int_{-\tau}^0 |\tilde{\mu}(0)|^2 ds+\int_{-\tau}^0 \int_s^0 \{r|\tilde{f}(v)|^2+16\epsilon|\tilde{\mu}(v)|^2 \}dvds.
\end{aligned}
\end{equation}
By defining this Lyapunov function, we are setting $V_0=\frac{1}{2}\tilde{\nu}_2(0)+\frac{1}{2}\tilde{\nu}_3(0)-\tilde{\lambda}_3(0)+\frac{1}{2}, V_1=|\tilde{\mu}_t(0)|^2, \|f(\tilde{x}(\theta),\tilde{x}(-\tau+\theta))\|_R^2=|\tilde{f}(\theta)|^2$ and $\|g(\tilde{x}(\theta))\|_T^2=|\tilde{\mu}_t(\theta)|^2$ in the proof of Theorem 1 in \cite{R36}. It should be noted that Lemma \ref{lemma1} only gives sufficient conditions, where $R$ and $T$ in this two-qubit example can be chosen as semi-positive as long as one can ensure the Lyapunov function in \eqref{eq49} is non-negative. We may calculate the infinitesimal generator of $V_2$ as
\begin{equation}
\label{eq50}
\begin{aligned}
\mathcal{L}V_2(t,\tilde{\rho}) &=-4k\tilde{\mu}(0)\tilde{\mu}(-\tau)+q\{\tilde{\mu}(0)^2-\tilde{\mu}(-\tau)^2\}\\
& +\tau \left[ r|\tilde{f}_t(0)|^2 +16\cdot \epsilon |\tilde{\mu}(0)|^2\right]\\
& -\int_{-\tau}^0\{r|\tilde{f}(s)|^2+16\cdot \epsilon |\tilde{\mu}(s)|^2 \}ds.
\end{aligned}
\end{equation} 
With $e=\begin{bmatrix}
\tilde{\mu}(0)& \tilde{\mu}(-\tau) & \tilde{f}(0)
\end{bmatrix}^T
$ and $\tilde{\xi}(s)=\begin{bmatrix}
e^T & \tilde{f}(s)
\end{bmatrix}^T$, we have the following inequalities \cite{R36}
\begin{equation}
\label{eq51}
0\leq \tau e^TXe-\int_{-\tau}^0 e^TX e ds, \forall X\geq 0,
\end{equation}
\begin{equation}
\label{eq52}
\begin{aligned}
0&=(2-2)e^T S \left\{ \tilde{\mu}(0)-\tilde{\mu}(-\tau)-\int_{-\tau}^0 \tilde{f}(s)ds \right\}\\
& \leq 2e^T S\left( \tilde{\mu}(0)-\tilde{\mu}(-\tau) \right)-\int_{-\tau}^0 2e^T S \tilde{f}(s)ds+e^TS\epsilon^{-1}S^Te\\
&+\epsilon \left| \tilde{\mu}(0)-\tilde{\mu}(-\tau)-\int_{-\tau}^0 \tilde{f}(s)ds \right|^2.
\end{aligned}
\end{equation}
Hence, we obtain
\begin{equation}
\label{eq53}
\begin{aligned}
&\mathcal{L}V_2(t,\tilde{\rho})+0+0+16\cdot \int_{-\tau}^0 \epsilon |\tilde{\mu}(s)|^2ds -\\
&\epsilon \left| \tilde{\mu}(0)-\tilde{\mu}(-\tau)-\int_{-\tau}^0 \tilde{f}(s)ds \right|^2\\
&\leq e^T\left\{ M+\tau X+S\tilde{S}+S\epsilon^{-1}S^T \right\}e-\int_{-\tau}^0 \tilde{\xi}(s)^T \begin{bmatrix}
X & S\\
S^T & r
\end{bmatrix} \tilde{\xi}(s)ds,
\end{aligned}
\end{equation}
where $M=\left[\begin{smallmatrix}
q+\tau \epsilon & -2k & 0\\
-2k & -q & 0\\
0 & 0 & \tau r
\end{smallmatrix}\right]$.

For a given delay time $\tau$, if there exists a feedback gain $k$ such that we can find proper $q,r,\epsilon$ and $S$ satisfying $M+\tau X+S\tilde{S}+S\epsilon^{-1} S^T <0$ and take $X=r^{-1} S S^T$ to ensure $\left[ \begin{smallmatrix}
X & S\\
S^T & r
\end{smallmatrix}
 \right]$ is non-positive. We then have the following inequality with a positive $\beta$ being the smallest singular value of $M$ \cite{R19}
 \begin{equation}
 \label{eq54}
 \begin{aligned}
 \mathcal{L}V_2(t,\tilde{\rho}) & \leq -\beta \tilde{\mu}_t(0)^2 -16\epsilon \int_{-\tau}^0 |\tilde{\mu}_t(s) |^2 ds\\
 &+\epsilon \left| \tilde{\mu}_t(0)-\tilde{\mu}_t(-\tau)-\int_{-\tau}^0 \tilde{f}_t(s)ds \right|^2.
 \end{aligned}
 \end{equation}
We consider the positive term $\epsilon \left| \tilde{\mu}_t(0)-\tilde{\mu}_t(-\tau)-\int_{-\tau}^0 \tilde{f}_t(s)ds \right|^2$ in \eqref{eq54}. When $t\geq \tau$, by the dynamics of $\tilde{\mu}_t(0)=\mu_3(t)$ in \eqref{eq48} and It\^{o} isometry we have 
\begin{equation}
\label{eq55}
\begin{aligned}
& \mathbb{E}\left| \tilde{\mu}_t(0)-\tilde{\mu}_t(-\tau)-\int_{-\tau}^0 \tilde{f}_t(s)ds \right|^2\\
& =\mathbb{E} \left| \int_{-\tau}^0 \tilde{f}_t(s)ds+\int_{-\tau}^0 \tilde{g}(s)dW_s-\int_{-\tau}^0 \tilde{f}_t(s)ds \right|^2\\
 & \leq \mathbb{E}\left\{\int_{-\tau}^0 \tilde{g}_t(s)^2 ds \right\}\leq 16\cdot \mathbb{E}\left\{\int_{-\tau}^0 \tilde{\mu}_t(s)^2 ds \right\}.
\end{aligned}
\end{equation}
Substituting \eqref{eq55} into \eqref{eq54}, we have 
\begin{equation}
\label{eq56}
\mathbb{E}\mathcal{L}V_1 \leq -\beta \mathbb{E} \tilde{\mu}_t(0)^2.
\end{equation}
When $t\in [0,\tau)$,
\begin{equation}
\label{eq57}
\begin{aligned}
& \mathbb{E}\left| \tilde{\mu}_t(0)-\tilde{\mu}_t(-\tau)-\int_{-\tau}^0 \tilde{f}_t(s)ds \right|^2\\
&= \mathbb{E}\left| \mu_t-\mu_0-\int_{-t}^0 \tilde{f}_t(s)ds\right|^2+\left| \mu_0 -\mu_{t-\tau} \right|^2\\
& \leq 16\cdot \mathbb{E} \int_{-t}^0 \tilde{\mu}_t(s)^2ds+\left| \mu_0 -\mu_{t-\tau} \right|^2.
\end{aligned}
\end{equation}
Substituting \eqref{eq57} into \eqref{eq54}, we have
\begin{equation}
\label{eq58}
\begin{aligned}
\mathbb{E} \mathcal{L} V_1 &\leq -\beta \mathbb{E}\tilde{\mu}_t(0)^2 -16\cdot \epsilon_1 \int_{-\tau}^0 |\tilde{\mu}_t(s)|^2ds\\
& +\epsilon_1 \left( 16\cdot \mathbb{E}\int_{-t}^0 \tilde{\mu}_t(s)^2 ds+|\mu_0-\mu_{t-\tau} |^2\right)\\
& \leq \epsilon_1.
\end{aligned}
\end{equation}
From \eqref{eq56} and \eqref{eq58}, we have 
\begin{equation}
\label{eq59}
\mathbb{E} \mathcal{L}V_1 \leq \left\{
\begin{array}{c}
\epsilon_1, \quad t\in [0,\tau)\\
-\beta \mathbb{E}\tilde{\mu}_t(0)^2,  \quad t\geq \tau.
\end{array}
\right.
\end{equation}
According to Lemma 1 in \cite{R19}, $\mu_t$ converges to $0$ with probability $1$. With these system parameters and based on \eqref{eq16}, we have 
\begin{equation}
\label{eq60}
\mathcal{L}V(\rho_t) = -4k \tilde{\mu}_t(0) \tilde{\mu}_t(-\tau)-\eta_A \Gamma_A {\rm }^2\left(\mathcal{H}[A]\rho_t\rho_d \right).
\end{equation}
Since $\mu_t$ converges to $0$ with probability $1$ when $t \rightarrow \infty$, we have $\mathcal{L}V(\rho_t)\leq -\eta_A \Gamma_A {\rm }^2\left(\mathcal{H}[A]\rho_t\rho_d \right) \leq 0$, which means \eqref{eq27} holds. This completes the proof.$\blacksquare$
\end{proof}
\begin{remark}
The stability criterion for this two-qubit system has been given in the form of LMIs, which enables us to systematically search for a feedback gain for a given delay time such that the target state is globally stabilised. For two-qubit quantum systems, we assume that the dephasing noise is described as $\Delta(t)=\beta(t)(\sigma_z\times \sigma_z)$, where $\beta(t)$ represents a stochastic process \cite{dephasing}. We obtain the infinitesimal generator of the Lyapunov function $V(\rho)$ as $\mathcal{L}V(\rho_t)=2{\rm Tr}(\rho_t\rho_d){\rm Tr}(i[H_2,\rho_t]\rho_d)u^{(2)}-\eta_A\Gamma_A {\rm Tr}^2(\mathcal{H}[A]\rho_t \rho_d)$ since ${\rm Tr}(i[\Delta(t),\rho_t]\rho_d)=0$. Hence, the proposed switching Lyapunov control law has good robustness in terms of dephasing noise for two-qubits systems.
\end{remark}
%%%%%%%%%%%%%%%%%%%%%%%%%%%%%%%%%%%%%%%%%%%%%%%%

\section{Numerical examples}
\label{sec5}
Now, we present numerical results on a two-qubit system for two different control strategies. The system parameters of this two-qubit system are the same as in Section \ref{subsec4.1}.

Suppose the delay time is $\tau=0.2$ (in unit with $\bar{h}=1$), and the target state is as in \eqref{eq47.1}. We consider two initial states $\rho_0^{(1)}={\rm diag}[0,1,0,0]\in S_{\geq {1-\frac{\gamma}{2}}}$ and $\rho_0^{(2)}={\rm diag}[1,0,0,0] \in S_{<1-\gamma}$, and set the measurement strength and efficiency as $\eta_A=1$ and $\Gamma_A=1$, respectively. We choose $\gamma=0.06$ and feedback gain as $k=1$. Fig. \ref{Fig.1} and Fig. \ref{Fig.2} show the results of two different control strategies with the same initial state $\rho_0^{(1)}$. We have repeated 30 simulations for each control strategy, and only show the average evolution curves (black and solid lines) and two special samples for clarity.

%%%%%%%%%%%%%%%%%%%%%%%%%%%%
\begin{figure}
\centering
\includegraphics[width=\textwidth]{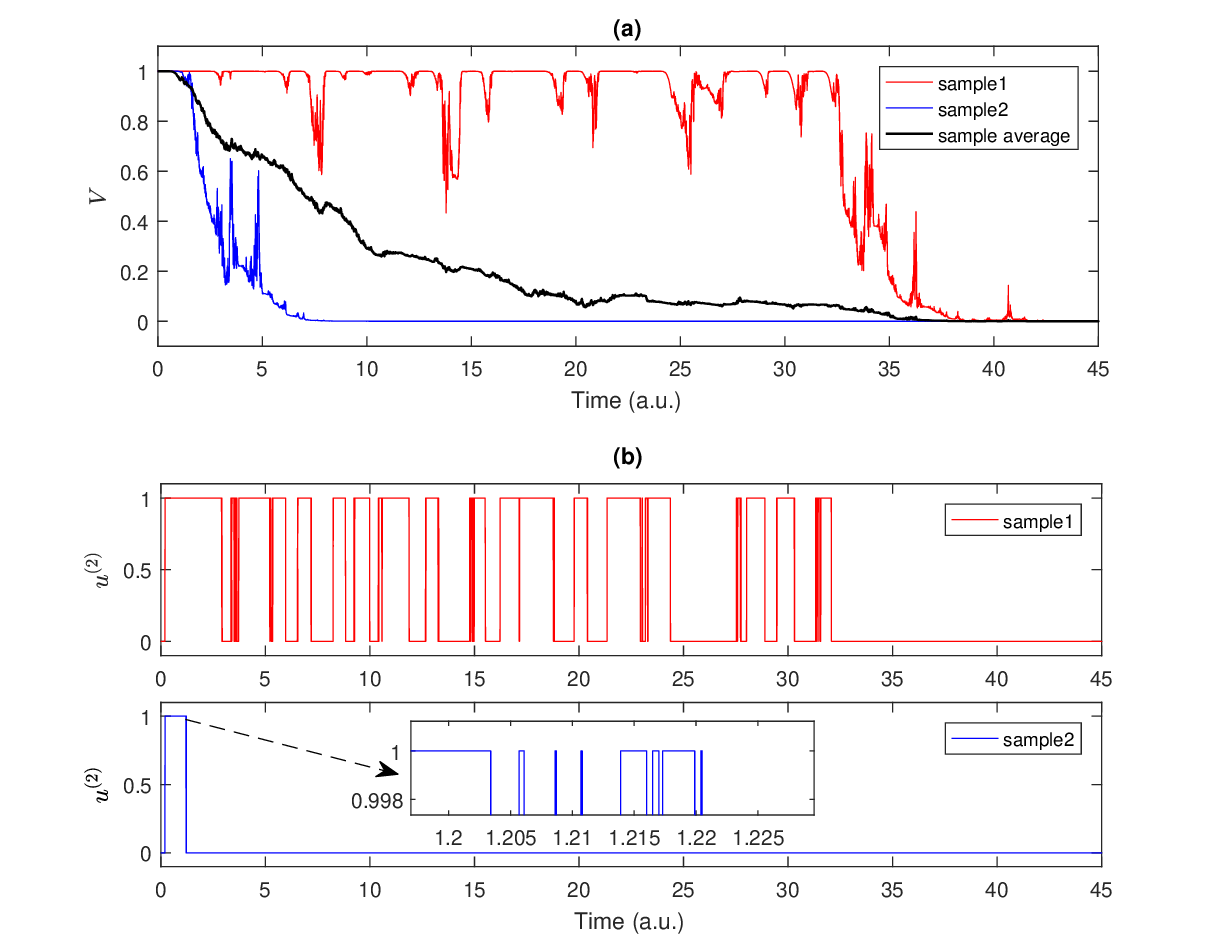}
\caption{(a) The evolution curve of the distance $V=1-{\rm Tr}^2(\rho \rho_d )$ between the system state and the target state with bang-bang-like control for initial state $\rho_0^{(1)}$ and $\tau=0.2$, and (b) the evolution curves of the control field $u^{(2)}$.\label{Fig.1}}
\end{figure}
%%%%%%%%%%%%%%%%%%%%%%%%%
\begin{figure}
\centering
\includegraphics[width=\textwidth]{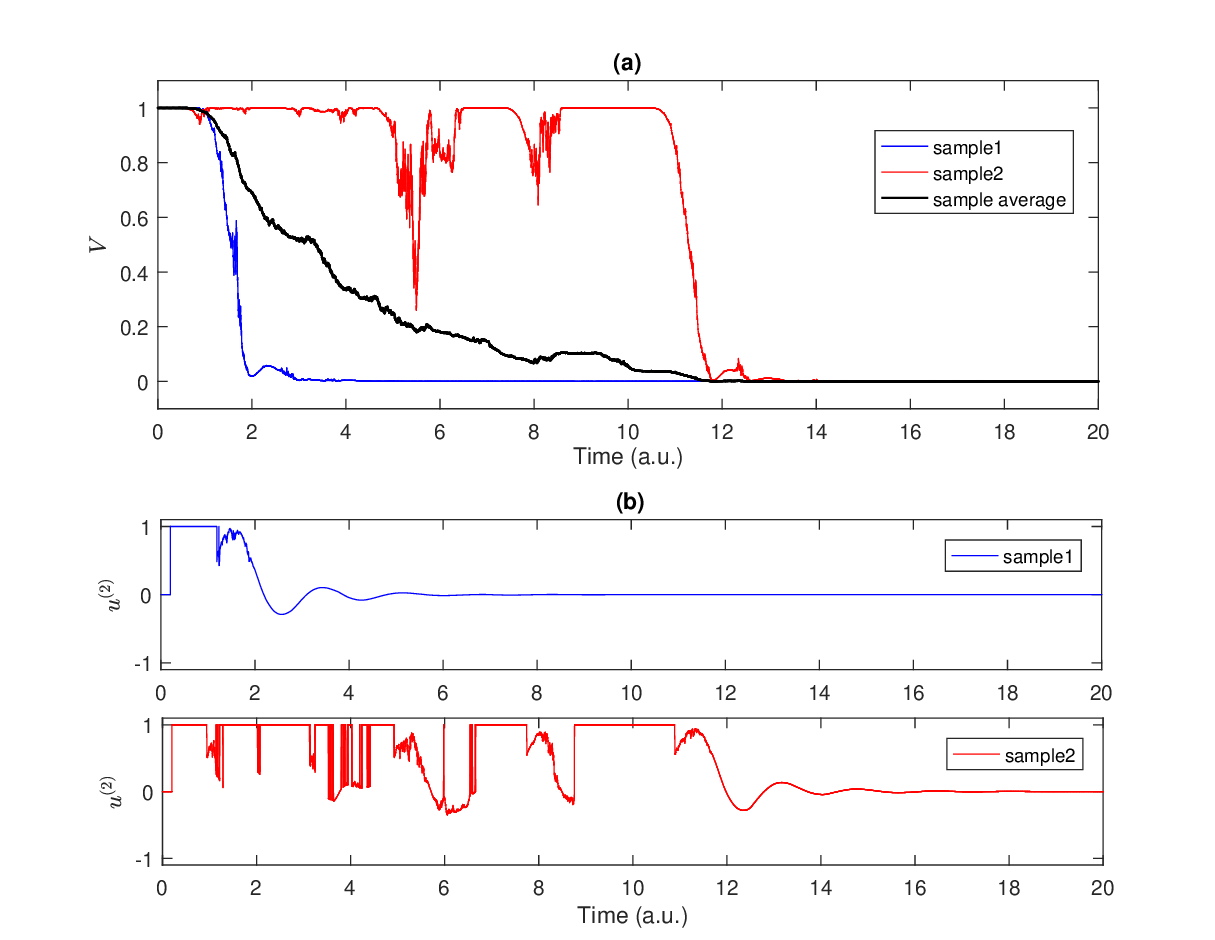}
\caption{(a) The evolution curve of the distance $V=1-{\rm Tr}^2(\rho \rho_d )$ between the system state and the target state with switching Lyapunov control for initial state $\rho_0^{(1)}$ and $\tau=0.2$, and (b) the evolution curves of the control field $u^{(2)}$.\label{Fig.2}}
\end{figure}
From Fig. \ref{Fig.1} and Fig. \ref{Fig.2}, we can see that the distances between the system state and the target state (for every single sample and the average result) converge to zero, which means that the system state converges to the target state. In these two figures, the initial state lies in  $S_{\geq {1-\frac{\gamma}{2}}}$. Hence, both the control laws start from the constant value $u^{(2)}=1$. 

Theoretically, the proposed two control strategies also work for the cases of imperfect measurement; i.e., $\eta<1$, and dephasing noise. The results for $\eta=0.8$ are shown in Fig. \ref{Fig.3}, and the results for dephasing noise $\Delta(t)=\beta(t)(\sigma_z\otimes \sigma_z)$ are presented in Fig. \ref{Fig.4}, where the initial state is taken as $\rho_0^{(2)}$. Similarly, we only show the average results over 30 repetitions and two specific samples for clarity. The results show that the control laws designed using the two control strategies can stabilise the target state effectively even when the measurement efficiency is less than one, or there exists dephasing noise.

\begin{figure}
\centering
\includegraphics[width=\textwidth]{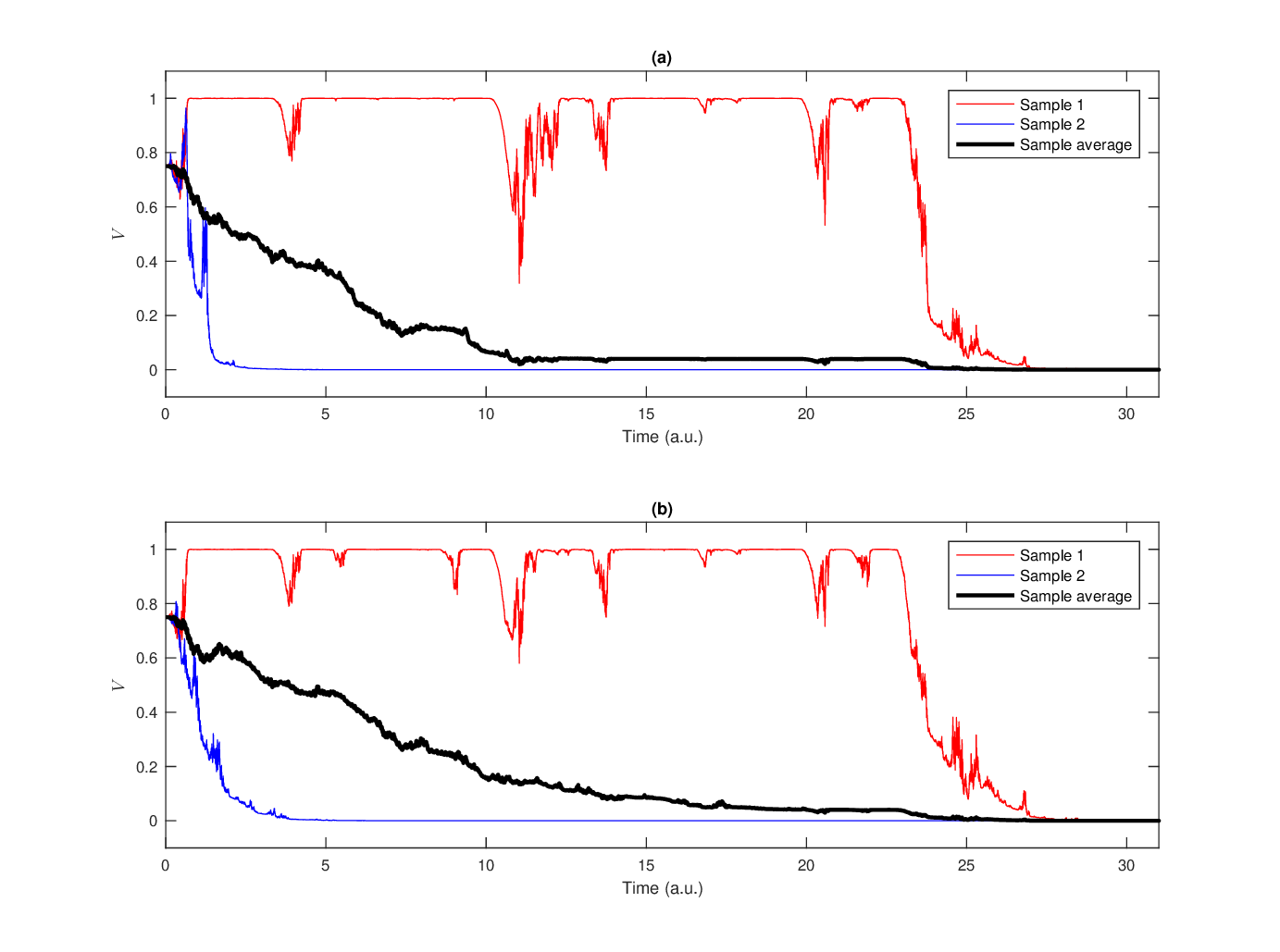}
\caption{The results with imperfect measurement $\eta_A=0.8$; (a)  The evolution curve for the distance $V=1-{\rm Tr}^2(\rho \rho_d )$ with switching Lyapunov control for initial state $\rho_0^{(2)}$ and $\tau=0.2$, (b) The evolution curve for the distance $V=1-{\rm Tr}^2(\rho \rho_d )$ with bang-bang-like control for initial state $\rho_0^{(2)}$ and $\tau=0.2$. \label{Fig.3}}
\end{figure}

\begin{figure}
\centering
\includegraphics[width=\textwidth]{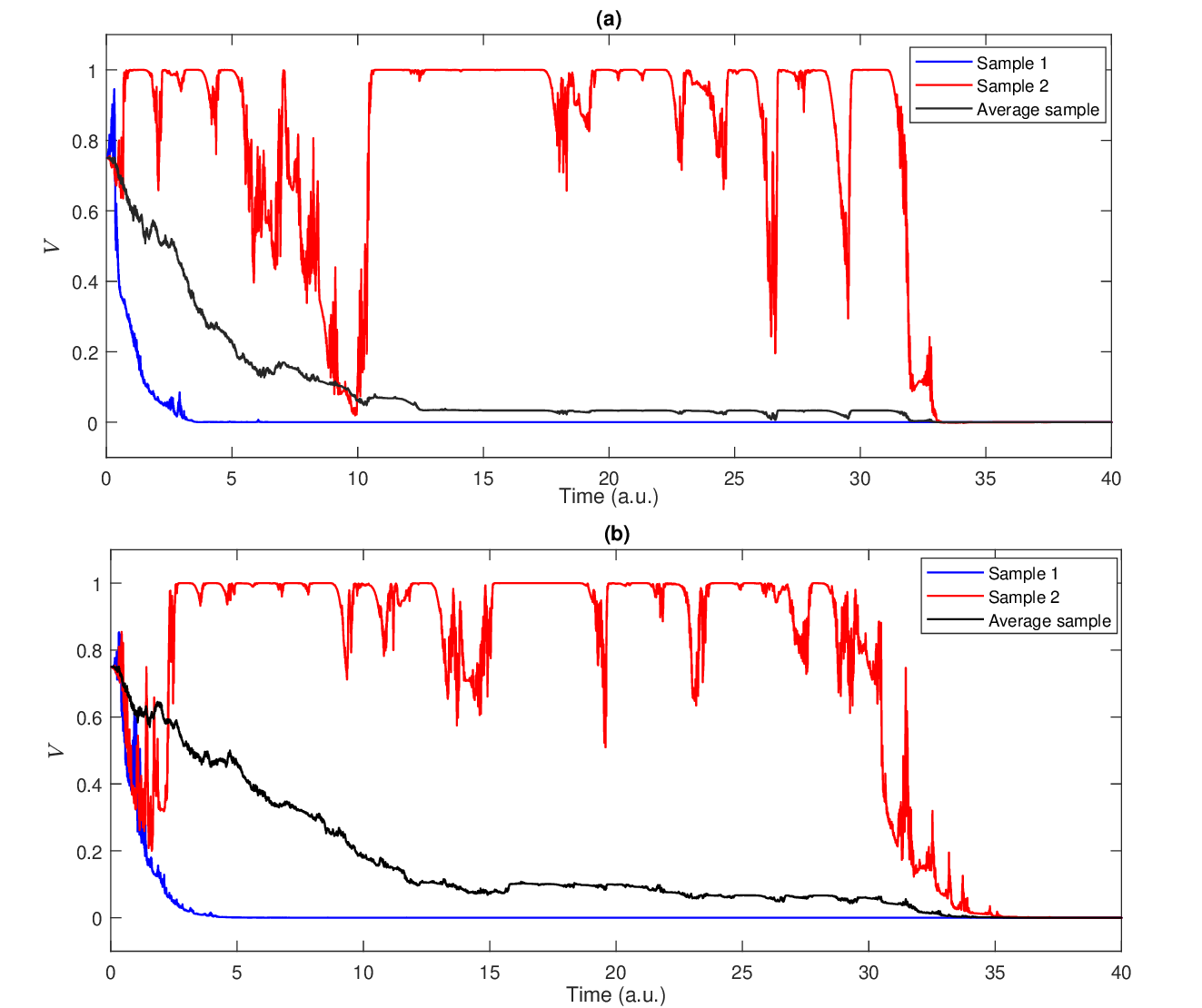}
\caption{The results with dephasing noise; (a)  The evolution curve for the distance $V=1-{\rm Tr}^2(\rho \rho_d )$ with switching Lyapunov control for initial state $\rho_0^{(2)}$ and $\tau=0.2$, (b) The evolution curve for the distance $V=1-{\rm Tr}^2(\rho \rho_d )$ with bang-bang-like control for initial state $\rho_0^{(2)}$ and $\tau=0.2$.} \label{Fig.4}
\end{figure}

In Fig. \ref{Fig.5}, we compare the convergence performance between these two control strategies. The simulation curves in Fig. \ref{Fig.5} are the average result over 30 samples starting from the initial state $\rho_0^{(1)}$. It is clear that the filter-based feedback control $u^{(2)}=-{\rm Tr}(i[H_2,\rho_{t-\tau}]\rho_d)$ can speed up the convergence to the target state and has improved performance. The comparison for the initial state $\rho_0^{(2)}$ has a similar result to Fig. \ref{Fig.5}.

\begin{figure}
\centering
\includegraphics[width=\textwidth]{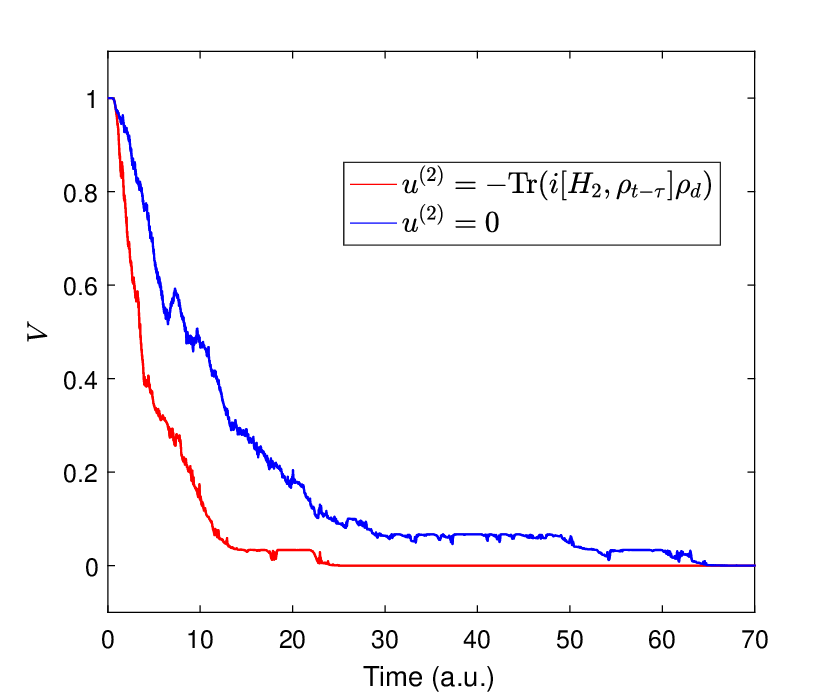}
\caption{Comparison of the distance $V=1-{\rm Tr}^2(\rho \rho_d )$  between $u^{(2)}=0$ and $u^{(2)}=-{\rm Tr}(i[H_2,\rho_{t-\tau}]\rho_d)$ for initial state $\rho_0^{(1)}$ and $\tau=0.1$. \label{Fig.5}}
\end{figure}

\section{Conclusions}
\label{sec6}
This paper investigated the problem of the generation of given entangled states in qubit systems with a constant delay time. Two control strategies have been designed based on the Lyapunov method to drive the system to a desired target state from any initial state. The bang-bang-like control strategy has a simple form with switching between a constant value and zero, while the switching Lyapunov control has the potential to speed up the convergence. The stability of the bang-bang-like control for $N$-qubit systems has been proved theoretically, and we obtained the stability of switching Lyapunov control law in the form of LMIs in a two-qubit example. Numerical results for a two-qubit system show that the proposed control strategies can achieve the control task effectively, and work well for stochastic quantum systems with different delay time. The robustness for these two control strategies in terms of dephasing noises have been tested in simulations, and the results show that the proposed feedback control strategies have great potential to generate and protect quantum entanglement with robust performance. In future work, the strict proof of the switching Lyapunov control for $N$-qubit systems will be considered, and the maximal accepted delay time, and the case with time-varying or uncertain delay time are open questions that are worth further investigating for both two control strategies. 
\appendix
\section{Appendix}
\subsection{Proof of Proposition \ref{proposition1}}
\begin{proof}
The fact that system state converges to the target state $\rho_d$ in probability means:
\begin{equation}
\label{eq19}
{\rm lim}_{t\rightarrow \infty}P \left\{ \|\rho_t-\rho_d \|>\epsilon \right\}=0,\forall \epsilon>0,
\end{equation}
where $\|\cdot\|$ is the Frobenius form.

The continuous function $V(\rho_t)$ converges to $V(\rho_d)$ in probability since the system state $\rho_t$ converges to $\rho_d$ in probability (\cite{R24}, pp.60). Hence, we have:
\begin{equation}
\label{eq20}
{\rm lim}_{t\rightarrow \infty}P \left\{ |V(\rho_t )-V(\rho_d )|>\epsilon \right\}=0,\forall \epsilon >0.
\end{equation}
That means:
\begin{equation}
\label{eq21}
{\rm lim}_{t\rightarrow \infty }P\left\{V(\rho_t )>\epsilon \right\}=0,\forall \epsilon >0.
\end{equation}
As the Lyapunov function satisfies $0\leq V(\rho_t )\leq 1$, we only need to consider the fact of $\epsilon \leq 1$ in \eqref{eq20} and \eqref{eq21}. We calculate $E[V(\rho_t )]$ as
\begin{equation}
\label{eq22}
E[V(\rho_t )]\leq \epsilon \left[1-P\{V(\rho_t )>\epsilon\}\right]+1 \cdot P\{V(\rho_t )>\epsilon\}.
\end{equation}
Using the sign preserving property of the limit, evaluating the limit value of the two sides of \eqref{eq22} and considering \eqref{eq21}, we have
\begin{equation}
\label{eq23}
{\rm lim}_{t\rightarrow \infty}E\left[V(\rho_t )\right]\leq \epsilon,\forall \epsilon > 0.
\end{equation} 
Equation \eqref{eq23} implies:
\begin{equation}
\label{eq24}
{\rm lim}_{t\rightarrow \infty}E\left[V(\rho_t )\right]=0.
\end{equation}
However, $V(\rho_t)$ is continuous and bounded. Using the dominated convergence theorem (\cite{R24}, pp.72), we obtain:
\begin{equation}
\label{eq25}
E\left[{\rm lim}_{t\rightarrow \infty}V(\rho_t ) \right]=0.
\end{equation}
Since the Lyapunov function is non-negative; i.e., $V(\rho_t )\geq 0$, \eqref{eq25} implies:
\begin{equation}
\label{eq26}
{\rm lim}_{t\rightarrow \infty}V(\rho_t )=0.
\end{equation}
That is, $\rho_t$ converges to $\rho_d$ with probability $1$.

On the other hand, $\rho_t$ is a strong Markov process \cite{R12}. According to the convergence of stochastic processes, if the system state converges to $\rho_d$ with probability $1$, it must converge to $\rho_d$ in probability. $\blacksquare$
\end{proof}

%available R9,R33,R34,

\end{document}